\newtheorem{remark}{Remark}
\newtheorem{theorem}{Theorem}
\newtheorem{lemma}{Lemma}
\newtheorem{corollary}{Corollary}
\newcommand{\biggg}{\bBigg@{3}}
\newcommand{\Biggg}{\bBigg@{3.5}}
\def\BibTeX{{\rm B\kern-.05em{\sc i\kern-.025em b}\kern-.08em
    T\kern-.1667em\lower.7ex\hbox{E}\kern-.125emX}}
\begin{document}
\title{Robust Beamforming for Pinching-Antenna Systems}
\author{
\author{
Mingjun Sun,~\IEEEmembership{Graduate Student Member,~IEEE,} Chongjun Ouyang,~\IEEEmembership{Member,~IEEE,}\\ Shaochuan Wu,~\IEEEmembership{Senior Member,~IEEE,} and Yuanwei Liu,~\IEEEmembership{Fellow,~IEEE}
\thanks{Mingjun Sun and Shaochuan Wu are with the School of Electronics and Information Engineering, Harbin Institute of Technology, Harbin 150001, China (e-mail: sunmj@stu.hit.edu.cn; scwu@hit.edu.cn).}
\thanks{Chongjun Ouyang is with the School of Electronic Engineering and Computer Science, Queen Mary University of London, London, E1 4NS, U.K. (e-mail: c.ouyang@qmul.ac.uk).}
\thanks{Yuanwei Liu is with the Department of Electrical and Electronic Engineering, The University of Hong Kong, Hong Kong (e-mail: yuanwei@hku.hk).}
}}
\maketitle
\begin{abstract}
Pinching-antenna system (PASS) mitigates large-scale path loss by enabling flexible placement of pinching antennas (PAs) along the dielectric waveguide. However, most existing studies assume perfect channel state information (CSI), overlooking the impact of channel uncertainty. This paper addresses this gap by proposing a robust beamforming framework for both lossy and lossless waveguides.
For baseband beamforming, the lossy case yields an second-order cone programming-based solution, while the lossless case admits a closed-form solution via maximum ratio transmission. The PAs' positions in both cases are optimized through the Gauss-Seidel-based method. Numerical results validate the effectiveness of the proposed algorithm and demonstrate that PASS exhibits superior robustness against channel uncertainty compared with conventional fixed-antenna systems. Notably, its worst-case achievable rate can even exceed the fixed-antenna baseline under perfect CSI.

\end{abstract} 
\begin{IEEEkeywords}
Channel uncertainty, pinching-antenna systems, robust optimization, second-order cone programming.
\end{IEEEkeywords}
\section{Introduction}
Flexible-antenna architectures such as reconfigurable intelligent surfaces \cite{Wu2020}, movable antennas \cite{zhu2023movable}, and fluid antennas \cite{wong2020fluid} have challenged the notion of uncontrollable channels, introducing new degrees of freedom (DoFs) for multiple-input multiple-output (MIMO) enhancement. These designs boost rates by reconfiguring effective channel gains, mainly addressing small-scale fading. However, large-scale path loss remains a dominant source of attenuation. To address this, the pinching-antenna system (PASS) was proposed \cite{liu2025pinchingantenna}, which extends a waveguide and activates pinching antennas (PAs) at arbitrary positions along it to establish strong line-of-sight (LoS) links, thereby mitigating large-scale path loss \cite{suzuki2022pinching}.

Recently, extensive studies have been investigated on PASS.
The signal and system model was first introduced in \cite{ding2024flexible}, where its performance under various waveguide and PA configurations was analyzed. The optimal number and spacing of PAs were studied in \cite{ouyang2025array} by omitting the impact of in-waveguide propagation loss, while \cite{tyrovolas2025performanceanalysis} evaluated system outage probability when such loss is present. Additionally, \cite{LoS} explored the favorable impact of LoS blockage on suppressing co-channel interference.
Compared with fixed-antenna system, PASS provides additional spatial DoFs thorough pinching beamforming.
For single-waveguide scenarios, the PA placement algorithm has been proposed \cite{wang2024antenna,xu2024rate}. 
For multiple-waveguide scenarios, several joint optimization algorithms for baseband and pinching beamforming have been proposed for various objectives, including achievable rate maximization \cite{bereyhi2025mimopass,Mingjun}, power minimization \cite{wang2025modeling}, and max-min fairness\cite{KitPASS}. Beyond classical optimization methods, AI-based joint beamforming designs have also been investigated. Early efforts in this direction appeared in the context of movable antennas \cite{hanguo1,hanguo3}. More recently, the authors in \cite{Guojia} proposed a graph neural network (GNN)-based algorithm to efficiently handle the highly nonconvex and coupled optimization problems. In addition, \cite{hanguo2} incorporated the receiver antenna position into the optimization and introduced CaMPASS-Net to effectively tackle the point-to-point MIMO capacity maximization problem.

However, the above studies all assume known channel state information (CSI). Although existing works \cite{xiao2025channelestimation, zhou2025channel} achieve relatively high channel estimation accuracy for PASS, perfect CSI is not available. The impact of channel uncertainty has not been examined. To address this gap, this paper investigates robust beamforming under two representative channel uncertainty models: norm-bounded \cite{Robust1} and probabilistic error models \cite{lemma}.
The former assumes the channel error lies within a bounded spherical region, while the latter treats it as a complex Gaussian random variable. 
These lead to two corresponding formulations: worst-case robust optimization and chance-constrained optimization, aiming to maximize the worst-case and nonoutage achievable rates (ARs), respectively. These two metrics have been extensively studied in \cite{Robust1,worstcase,lemma,Outage1}. Another line of robustness aims to optimize the long–term (average) quality of service (QoS), which leads to a stochastic optimization problem\cite{stochastic1}. This third strand is relatively orthogonal to the above and is therefore not considered in this work. Note that for PASS, the in-waveguide channel is deterministic, and uncertainty arises solely from the wireless propagation outside the waveguide.

\begin{figure}[!t]
 \centering
\setlength{\abovecaptionskip}{0pt}
\includegraphics[height=0.21\textwidth]{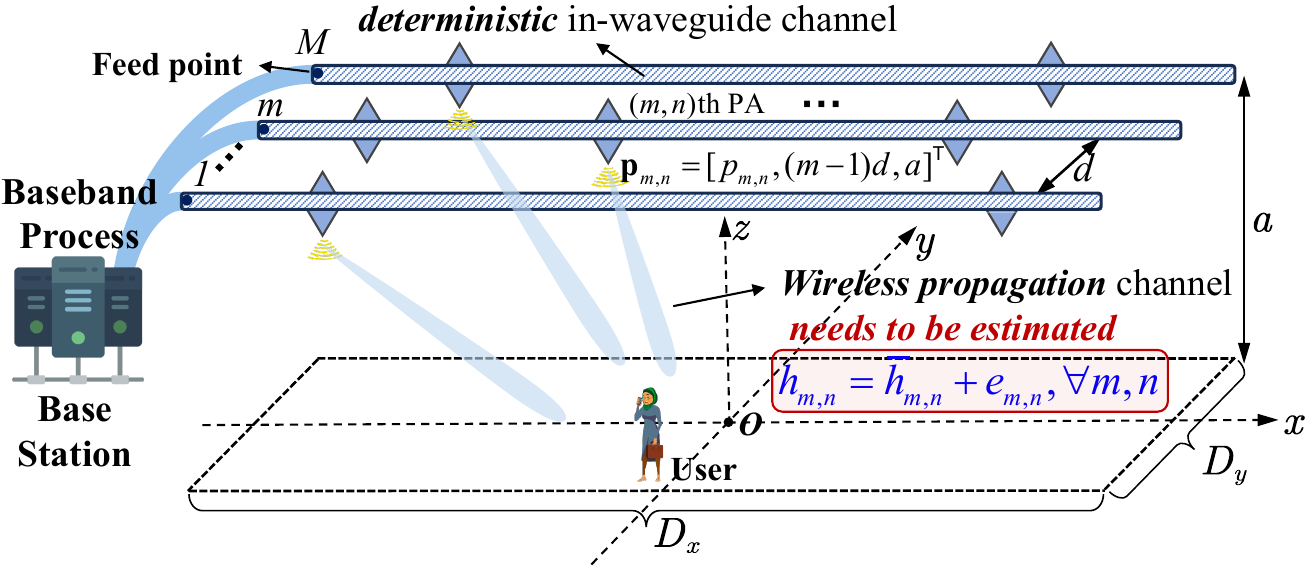}
\caption{Downlink PASS with imperfect CSI.}
\label{Figure: System_Model}
\vspace{-10pt}
\end{figure}

The main contributions of this article are summarized as follows: 
To address both the worst-case and the chance-constrained optimization problems, we first establish their equivalence, enabling a unified solution framework. We then propose a robust beamforming algorithm that alternately optimizes the baseband and pinching beamformers and guarantees convergence. For lossy waveguides, the baseband beamformer is obtained via second-order cone programming (SOCP), while for lossless case, it follows maximum ratio transmission (MRT). For the pinching beamformer, we employ
a Gauss-Seidel-based one-dimensional (GS1D) search to update the PAs' positions. 
Simulation results validate the proposed algorithm's effectiveness and demonstrate that PASS exhibits strong robustness to channel uncertainty, even surpassing fixed-antenna systems with perfect CSI in worst-case performance.

\section{System Model}\label{Section: System Model}
Consider a downlink PASS, where a base station (BS) serves a single-antenna user\footnote{This framework can, in principle, be extended to a multiuser setting under time-division multiple access (TDMA). As an initial exploration of robust PASS, however, the present work focuses on the single-user case. A comprehensive multiuser robust beamforming design that explicitly accounts for inter-user interference is left for future work.} using $M$ pinched waveguides, as illustrated in {\figurename} {\ref{Figure: System_Model}}. 
In each transmission slot, the BS simultaneously activates $N$ PAs per waveguide. 
Assuming the user is randomly located within a predefined two-dimensional rectangular area of size $D_x\times D_y$, with its position denoted as ${\mathbf{u}}=[x, y, 0]^{\mathsf{T}}$.
The $M$ waveguides\footnote{The proposed framework can also be readily simplified to apply to the single-waveguide scenario.} are deployed at height $a$ along the $x$-axis and uniformly spaced by $d$ along the $y$-axis. The length of the $m$th waveguide is denoted by $L_m$. The location of the $n$th PA on the $m$th waveguide (i.e., the $(m,n)$th PA) is given by ${\mathbf{p}}_{m,n}=[p_{m,n},-\frac{D_y}{2}+(m-1)d,a]^{\mathsf{T}}$, where $p_{m,n}$ indicates PA's position along the $x$-axis. 

\subsection{Signal Model}
For downlink transmission, the signal received by the user from all PAs can be expressed as follows:
\begin{align}\label{DL_Signal_Model_Initial}
y=\sum_{m=1}^{M}\sum_{n=1}^{N}h_{m,n}x_{m,n}+z,
\end{align}
where $x_{m,n}$ represents the signal radiated from the $(m,n)$th PA, $h_{m,n}$ is the corresponding channel coefficient, and $z\sim{\mathcal{CN}}(0,\sigma_k^2)$ denotes additive white Gaussian noise with noise power $\sigma^2$. Accounting for in-waveguide propagation from the BS to the PA, the signal $x_{m,n}$ is expressed as follows:
\begin{align}\label{Inner_Waveguide_Signal}
x_{m,n}=\eta_{m,n}^{\frac{1}{2}}{\rm{e}}^{-{\rm{j}}k_{\rm{g}}\left\lvert p_{m,n}-o_m\right\rvert}x_{m},
\end{align}
where $k_{\rm{g}}=\frac{2\pi}{\lambda_{\rm{g}}}$ is the waveguide wavenumber with $\lambda_{\rm{g}}$ representing the waveguide wavelength, $o_m$ is the signal feeding position on the $m$th waveguide, $x_m$ denotes the signal at the feeding point, and $\eta_{m,n}$ captures power loss along the dielectric waveguide. Following \cite{wang2025modeling}, the radiated power can be controlled by adjusting the coupling length between each PA and the waveguide. Based on the equal power model in \cite[Eq. (20)]{wang2025modeling}, we model $\eta_{m,n}$ as follows:
\begin{align}
\eta_{m,n}=10^{-\frac{\kappa\left\lvert p_{m,n}-o_m\right\rvert}{10}}/N.
\end{align}
where $\kappa\geq0$ denotes the average attenuation (in dB/m) along the waveguide \cite{yeh2008essence}. We note that $\kappa=0$ corresponds to a lossless dielectric waveguide. In practice, propagation loss accumulates with waveguide length, so longer waveguides cause greater performance degradation; see \cite{yanqing1,yanqing2} for more detailed analyses.

For compactness, we rewrite \eqref{DL_Signal_Model_Initial} as follows:
\begin{align}\label{DL_Signal_Model}
y={\mathbf{h}}^{\mathsf{H}}({\mathbf{P}}){\mathbf{G}}({\mathbf{P}}){\mathbf{x}}+z.
\end{align}
Here, ${\mathbf{P}}=[{\mathbf{p}}_{1},\ldots,{\mathbf{p}}_{M}]\in{\mathbbmss{R}}^{N\times M}$ collects the $x$-axis positions of all PAs, where ${\mathbf{p}}_{m}=[p_{m,1},\ldots,p_{m,N}]^{\mathsf{T}}\in{\mathbbmss{R}}^{N\times 1}$ stores the positions of the PAs along the $m$th waveguide. The channel vector is given by
${\mathbf{h}}({\mathbf{P}})=[\mathbf{h}_{\rm f}^{\mathsf{T}}({\mathbf{p}}_{1}),\ldots,\mathbf{h}_{\rm f}^{\mathsf{T}}({\mathbf{p}}_{M})]^{\mathsf{H}}\in{\mathbbmss{C}}^{MN\times1}$, where $\mathbf{h}_{\rm f}({\mathbf{p}}_{m})=[h_{m,1},\ldots,h_{m,N}]^{\mathsf{T}}\in{\mathbbmss{C}}^{N\times 1}$ denotes the channel vector from the user to the $m$th waveguide. The block diagonal matrix ${\mathbf{G}}({\mathbf{P}})={\rm{BlkDiag}}({\mathbf{g}}({\mathbf{p}}_{1});\ldots;{\mathbf{g}}({\mathbf{p}}_{M}))\in{\mathbbmss{C}}^{MN\times M}$ characterizes wave propagation within the waveguides, where ${\mathbf{g}}({\mathbf{p}}_{m})=\left[\eta_{m,1}^{\frac{1}{2}}{\rm{e}}^{-{\rm{j}}k_{\rm{g}}\left\lvert p_{m,1}-o_m\right\rvert},\ldots,\eta_{m,N}^{\frac{1}{2}}{\rm{e}}^{-{\rm{j}}k_{\rm{g}}\left\lvert p_{m,N}-o_m\right\rvert}\right]^{\mathsf{T}}\in{\mathbbmss{C}}^{N\times 1}$. Finally, ${\mathbf{x}}=[x_1,\ldots,x_M]^{\mathsf{T}}={\mathbf{w}}s\in{\mathbbmss{C}}^{M\times1}$ denotes the signal after baseband beamforming, where ${\mathbf{w}}\in{\mathbbmss{C}}^{M\times 1}$ is the baseband beamforming vector, and $s\sim{\mathcal{CN}}(0,1)$ is the transmit symbol.
Accordingly, the received signal-to-noise ratio (SNR) at the user is given by
$\gamma=\lvert{\mathbf{h}}^{\mathsf{H}}({\mathbf{P}}){\mathbf{G}}({\mathbf{P}}){\mathbf{w}}\rvert^2 /{\sigma^2}$.

\subsection{Channel Uncertainty Model}
In PASS, the overall channel comprises a deterministic in-waveguide component ${\mathbf{G}}({\mathbf{P}})$ and a wireless propagation component ${\mathbf{h}}({\mathbf{P}})$, where only the latter requires estimation\footnote{When a LoS path exists, user-localization-based spherical-wave LoS reconstruction can be used for channel estimation, whereas in purely nonline-of-sight (NLoS) or mixed LoS/NLoS conditions, uplink pilot-based training can be employed for more accurate channel acquisition. To avoid the prohibitive complexity of estimating channels for the theoretically infinite set of continuous PAs' positions, one may resort to interpolation-based reconstruction or predefine a reduced candidate set of PAs' positions tailored to the specific scenario.}. Due to the presence of receiver noise, limited pilot overhead, and model mismatch, estimation errors are inevitably introduced \cite{xiao2025channelestimation, zhou2025channel}. To characterize such imperfections, the actual wireless propagation channel vector is modeled as follows\footnote{The channel can be modeled as LoS, NLoS, or a mixture of both.}:
\begin{align}\label{Channel_Uncertainty}
\mathbf{h}(\mathbf{P})= \bar{\mathbf{h}}(\mathbf{P}) + \mathbf{e}(\mathbf{P}),
\end{align}
where $\bar{\mathbf{h}}({\mathbf{P}})=[\bar{\mathbf{h}}_{\rm f}^{\mathsf{T}}({\mathbf{p}}_{1}),\ldots,\bar{\mathbf{h}}_{\rm f}^{\mathsf{T}}({\mathbf{p}}_{M})]^{\mathsf{H}}\in{\mathbbmss{C}}^{MN\times1}$ is the estimated channel, with $\bar{\mathbf{h}}_{\rm f}({\mathbf{p}}_{m})=[\bar{h}_{m,1},\ldots,\bar{h}_{m,N}]^{\mathsf{T}}\in{\mathbbmss{C}}^{N\times 1}$, and $\mathbf{e}(\mathbf{P})\in{\mathbbmss{C}}^{MN\times1}$ denotes the corresponding estimation error. 
In general, the uncertainty of $\mathbf{e}(\mathbf{P})$ can be modeled in two ways: the \emph{norm-bounded error} model\cite{Robust1} and the \emph{probabilistic error} model\cite{lemma}.
In the former, the error vector lies within a bounded uncertainty set, typically defined as $\|\mathbf{e}(\mathbf{P})\|_{\rm 2} \leq \delta$, where $\delta >0$ denotes the error bound. In the latter, $\mathbf{e}(\mathbf{P})$ is treated as a random vector, and is commonly modeled as a circularly symmetric complex Gaussian variable with zero mean and a known covariance matrix $\epsilon^2 \mathbf{I}$, i.e., $\mathbf{e}(\mathbf{P}) \sim \mathcal{CN}(\mathbf{0}, \epsilon^2 \mathbf{I})$.

\subsection{Problem Formulation}
We consider two types of PA activation schemes: \emph{discrete} and \emph{continuous}.
For the former, all PAs on each waveguide $m$ are assumed to be activatable only at $Q$ pre-configured discrete positions, forming a discrete feasible set $\mathcal{S}_m = \left\{o_m+\frac{L_m}{Q-1}(q-1)| q=1,\ldots,Q\right\}$. For the latter, all PAs can be activated at any position along the waveguide.
This paper aims to jointly optimize the baseband and pinching beamformers to maximize the user's AR, equivalently maximizing the received SNR.
Based on the adopted channel uncertainty model, the problem can be formulated in two forms:
\subsubsection{Worst-Case Robust Optimization}
To account for all possible channel uncertainties, the problem leads to a semi-infinite optimization formulation, in which the objective is to maximize the worst-case received SNR over the entire uncertainty set:
\begin{subequations}\label{DL_SNR_Problem}
\begin{align}
(\mathcal{P}_1)\max_{{\mathbf{w}},\,{\mathbf{P}}}  & \min_{\|\mathbf{e}(\mathbf{P})\|_2 \leq \delta}~ \left| \mathbf{h}^{\mathsf{H}}(\mathbf{P}) \mathbf{G}(\mathbf{P}) \mathbf{w} \right| \\
\text{s.t.} & \|\mathbf{w}\|_2^2 \leq P_t, \label{dl_c3}\\
& |p_{m,n} - p_{m,n'}| \geq \Delta_{\min}, \forall m,~ n \ne n', \label{dl_c1}\\
& p_{m,n} \!\in \!\mathcal{S}_m \quad \text{or} \quad p_{m,n} \!- \!o_m \!\in\! [0, L_m], \forall m, n. \label{dl_c2}
\end{align}
\end{subequations}
where $P_t$ denotes transmit power budget, and $\Delta_{\rm min}>0$ ensures minimum spacing between adjacent PAs to avoid mutual coupling. 

\subsubsection{Chance-Constrained Optimization}
Considering that $\mathbf{e}(\mathbf{P})$ is a random variable, the received SNR also becomes random. Therefore, the problem can be formulated as a chance-constrained optimization:
\begin{subequations}\label{DL_SNR_Problem_chance}
\begin{align}
(\mathcal{P}_2)~\max_{{\mathbf{w}},{\mathbf{P}},\Gamma}&~ \Gamma\\
{\rm s.t.}~ & \Pr\left\{\lvert{\mathbf{h}}^{\mathsf{H}}({\mathbf{P}}){\mathbf{G}}({\mathbf{P}})\mathbf{w}\rvert \geq \Gamma\right\} \geq \rho \label{chance_C}\\ 
& \eqref{dl_c3},\eqref{dl_c1},\eqref{dl_c2},\nonumber
\end{align}
\end{subequations}
where $\Gamma = \sigma\sqrt{\Gamma_0}$ with $\Gamma_0$ denoting the nonoutage received SNR, $\Pr\{\cdot\}$ denotes the probability operator, and $\rho \in (0,1]$ specifies the required nonoutage probability.

\section{The Proposed Robust Beamforming Algorithm}
This section presents the solution procedure for problem $\mathcal{P}_1$. A lemma is then established to equivalently reformulate problem $\mathcal{P}_2$ into the same structure as $\mathcal{P}_1$, thereby enabling the use of the same solution approach.
Due to the semi-infiniteness of $\mathcal{P}_1$, the optimal robust pinching beamformer cannot be obtained directly. Therefore, it is necessary to reformulate the problem into a more tractable form.

By substituting \eqref{Channel_Uncertainty} into the objective function of \eqref{DL_SNR_Problem} and applying the triangle inequality, we have:
\begingroup              
\setlength{\abovedisplayskip}{2pt}
\setlength{\belowdisplayskip}{2pt}
\begin{align}
f(\mathbf{e}) &= \left\lvert\bar{\mathbf{h}}^{\mathsf{H}}({\mathbf{P}}){\mathbf{G}}({\mathbf{P}})\mathbf{w} + \mathbf{e}^{\mathsf{H}}({\mathbf{P}}){\mathbf{G}}({\mathbf{P}})\mathbf{w}\right\rvert \\\nonumber
&\geq \left\lvert\lvert \bar{\mathbf{h}}^{\mathsf{H}}({\mathbf{P}}){\mathbf{G}}({\mathbf{P}})\mathbf{w}\rvert - \lvert \mathbf{e}^{\mathsf{H}}({\mathbf{P}}){\mathbf{G}}({\mathbf{P}})\mathbf{w}\rvert\right\rvert,
\end{align}
\endgroup
where the equality holds when $\bar{\mathbf{h}}^{\mathsf{H}}({\mathbf{P}}){\mathbf{G}}({\mathbf{P}})\mathbf{w}$ and $\mathbf{e}^{\mathsf{H}}({\mathbf{P}}){\mathbf{G}}({\mathbf{P}})\mathbf{w}$ are in opposite phase. To further eliminate the uncertainty vector $\mathbf{e}$, the Cauchy-Schwarz inequality is applied, yielding:
\begingroup              
\setlength{\abovedisplayskip}{2pt}
\setlength{\belowdisplayskip}{2pt}
\begin{align}
\lvert \mathbf{e}^{\mathsf{H}}({\mathbf{P}}){\mathbf{G}}({\mathbf{P}})\mathbf{w}\rvert \!\leq \!\|\mathbf{e}({\mathbf{P}})\|_{\rm 2} \| {\mathbf{G}}({\mathbf{P}})\mathbf{w}\|_{\rm 2}  \!\leq \!\delta\| {\mathbf{G}}({\mathbf{P}})\mathbf{w}\|_{\rm 2}.
\end{align}
\endgroup
Moreover, it is easy to obtain that when $\lvert \bar{\mathbf{h}}^{\mathsf{H}}({\mathbf{P}}){\mathbf{G}}({\mathbf{P}})\mathbf{w}\rvert > \delta\| {\mathbf{G}}({\mathbf{P}})\mathbf{w}\|_{\rm 2}$ and $\mathbf{e}({\mathbf{P}}) = -\frac{\delta{\mathbf{G}}({\mathbf{P}})\mathbf{w}}{\| {\mathbf{G}}({\mathbf{P}})\mathbf{w}\|_{\rm 2}}\mathrm{e}^{j\angle\{\bar{\mathbf{h}}^{\mathsf{H}}({\mathbf{P}}){\mathbf{G}}({\mathbf{P}})\mathbf{w}\}}$, the worst-case $f(\mathbf{e})$ simplifies to $\left\lvert\bar{\mathbf{h}}^{\mathsf{H}}({\mathbf{P}}){\mathbf{G}}({\mathbf{P}})\mathbf{w} + \mathbf{e}^{\mathsf{H}}({\mathbf{P}}){\mathbf{G}}({\mathbf{P}})\mathbf{w}\right\rvert = \lvert \bar{\mathbf{h}}^{\mathsf{H}}({\mathbf{P}}){\mathbf{G}}({\mathbf{P}})\mathbf{w}\rvert - \delta\| {\mathbf{G}}({\mathbf{P}})\mathbf{w}\|_{\rm 2}$. In particular, when the error bound $\delta$ is large enough such that $\lvert \bar{\mathbf{h}}^{\mathsf{H}}({\mathbf{P}}){\mathbf{G}}({\mathbf{P}})\mathbf{w}\rvert \leq \delta\| {\mathbf{G}}({\mathbf{P}})\mathbf{w}\|_{\rm 2}$, the worst-case SNR drops to zero, rendering the robust beamformer ineffective.
Therefore, we have
\begin{align}
&\min_{\|\mathbf{e}(\mathbf{P})\|_{\rm 2} \leq \delta} \left\lvert\lvert \bar{\mathbf{h}}^{\mathsf{H}}({\mathbf{P}}){\mathbf{G}}({\mathbf{P}})\mathbf{w}\rvert - \lvert \mathbf{e}^{\mathsf{H}}({\mathbf{P}}){\mathbf{G}}({\mathbf{P}})\mathbf{w}\rvert\right\rvert \\ \nonumber
&\quad \quad= \max\left \{0, \lvert \bar{\mathbf{h}}^{\mathsf{H}}({\mathbf{P}}){\mathbf{G}}({\mathbf{P}})\mathbf{w}\rvert - \delta\| {\mathbf{G}}({\mathbf{P}})\mathbf{w}\|_{\rm 2}\right\}.
\end{align}
We focus on the non-zero case, which serves as a lower bound for $f(\mathbf{e})$. Accordingly, the original problem $\mathcal{P}_1$ can be equivalently reformulated as:
\begin{subequations}\label{DL_SNR_Problem_2}
\begin{align}
(\mathcal{P}_{\rm new})~\max_{{\mathbf{w}},{\mathbf{P}}}&~\lvert \bar{\mathbf{h}}^{\mathsf{H}}({\mathbf{P}}){\mathbf{G}}({\mathbf{P}})\mathbf{w}\rvert - \delta\| {\mathbf{G}}({\mathbf{P}})\mathbf{w}\|_{\rm 2}\\
{\rm{s.t.}}&~ \eqref{dl_c3},\eqref{dl_c1},\eqref{dl_c2},\nonumber
\end{align}
\end{subequations}
which is now a deterministic optimization problem. However, the optimization variables $\mathbf{w}$ and $\mathbf{P}$ are coupled, and due to the presence of the exponential form, the objective function exhibits multimodal characteristics, making the problem difficult to solve. 

To establish the equivalence between $\mathcal{P}_2$ and $\mathcal{P}_1$, we follow a similar approach in \cite{lemma} and present the following lemma.
\begin{lemma}\label{lemma1}
Given a nonoutage probability $\rho$, problems $\mathcal{P}_2$ and $\mathcal{P}_1$ are equivalent if the norm bound $\delta$ of the uncertainty set and the standard deviation $\epsilon$ of the random channel error at each PA's location satisfy the following condition:
\begin{align}\label{condition}
\delta = \epsilon \sqrt{-\ln(1-\rho)}.
\end{align}
\end{lemma}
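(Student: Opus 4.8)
The plan is to establish the equivalence by showing that the chance constraint \eqref{chance_C} in $\mathcal{P}_2$ can be converted into a deterministic constraint of exactly the same form as the worst-case objective in $\mathcal{P}_{\rm new}$, with the correspondence between $\delta$ and $\epsilon$ given by \eqref{condition}. First I would fix the optimization variables $\mathbf{w}$ and $\mathbf{P}$ and analyze the random quantity $\lvert\mathbf{h}^{\mathsf{H}}(\mathbf{P})\mathbf{G}(\mathbf{P})\mathbf{w}\rvert$. Substituting the uncertainty model \eqref{Channel_Uncertainty}, I would write this as $\lvert\bar{\mathbf{h}}^{\mathsf{H}}(\mathbf{P})\mathbf{G}(\mathbf{P})\mathbf{w}+\mathbf{e}^{\mathsf{H}}(\mathbf{P})\mathbf{G}(\mathbf{P})\mathbf{w}\rvert$, where the second term is a linear functional of the Gaussian error vector $\mathbf{e}(\mathbf{P})\sim\mathcal{CN}(\mathbf{0},\epsilon^2\mathbf{I})$.

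The key step is to characterize the distribution of $t \triangleq \mathbf{e}^{\mathsf{H}}(\mathbf{P})\mathbf{G}(\mathbf{P})\mathbf{w}$. Since $\mathbf{e}(\mathbf{P})$ is circularly symmetric complex Gaussian, $t$ is a zero-mean complex Gaussian scalar with variance $\epsilon^2\|\mathbf{G}(\mathbf{P})\mathbf{w}\|_2^2$; consequently $\lvert t\rvert$ follows a Rayleigh distribution. I would then lower-bound the probability in \eqref{chance_C} by conditioning on the worst-case phase alignment, reducing the event $\{\lvert\bar{\mathbf{h}}^{\mathsf{H}}\mathbf{G}\mathbf{w}+t\rvert\geq\Gamma\}$ to a tractable one-dimensional event governed by the Rayleigh magnitude. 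The next step is to invoke the closed-form Rayleigh cumulative distribution function to compute the outage probability explicitly, yielding a condition of the form $\exp\!\big(-(\lvert\bar{\mathbf{h}}^{\mathsf{H}}\mathbf{G}\mathbf{w}\rvert-\Gamma)^2/(\epsilon^2\|\mathbf{G}\mathbf{w}\|_2^2)\big)=1-\rho$. Rearranging this equality to isolate $\Gamma$ produces $\Gamma=\lvert\bar{\mathbf{h}}^{\mathsf{H}}(\mathbf{P})\mathbf{G}(\mathbf{P})\mathbf{w}\rvert-\epsilon\sqrt{-\ln(1-\rho)}\,\|\mathbf{G}(\mathbf{P})\mathbf{w}\|_2$, which is precisely the objective of $\mathcal{P}_{\rm new}$ once we identify $\delta=\epsilon\sqrt{-\ln(1-\rho)}$.

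Finally, I would argue the equivalence in both directions: maximizing $\Gamma$ subject to the chance constraint in $\mathcal{P}_2$ is identical to maximizing the deterministic surrogate objective, because the optimal $\Gamma$ for any feasible $(\mathbf{w},\mathbf{P})$ is attained with equality in the probability constraint and equals the $\mathcal{P}_{\rm new}$ objective value. Since the power, spacing, and feasibility constraints \eqref{dl_c3}, \eqref{dl_c1}, \eqref{dl_c2} are shared and unaffected by the reformulation, the two problems have the same feasible set and the same optimal solutions under condition \eqref{condition}.

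The main obstacle I anticipate is making the phase-alignment argument rigorous when lower-bounding the probability: the event involves the magnitude of a sum of a deterministic complex number and a complex Gaussian, and one must justify that the worst-case (or tightest) phase configuration reduces the problem exactly to the Rayleigh-tail computation rather than merely bounding it. Handling this carefully—either by exploiting the rotational invariance of the circularly symmetric error distribution or by explicitly integrating over the Gaussian density—is where the technical care is required; the remaining algebra to invert the exponential and recover \eqref{condition} is routine.
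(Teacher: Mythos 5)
Your proposal follows essentially the same route as the paper's proof: replace the chance constraint by the conservative surrogate event $\{\lvert \mathbf{e}^{\mathsf{H}}(\mathbf{P})\mathbf{G}(\mathbf{P})\mathbf{w}\rvert \leq \lvert \bar{\mathbf{h}}^{\mathsf{H}}(\mathbf{P})\mathbf{G}(\mathbf{P})\mathbf{w}\rvert - \Gamma\}$ via the triangle inequality, note that $\mathbf{e}^{\mathsf{H}}(\mathbf{P})\mathbf{G}(\mathbf{P})\mathbf{w}$ is $\mathcal{CN}(0,\epsilon^2\|\mathbf{G}(\mathbf{P})\mathbf{w}\|_2^2)$ so its magnitude is Rayleigh, invert the CDF, and read off $\delta=\epsilon\sqrt{-\ln(1-\rho)}$. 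The obstacle you flag (making the phase-alignment reduction exact) is resolved in the paper not by an exact computation but by accepting the triangle-inequality lower bound as a conservative approximation of the chance constraint, as the accompanying Remark explicitly acknowledges, so the claimed equivalence is between $\mathcal{P}_{\rm new}$ and this conservative reformulation rather than a strictly two-directional identity.
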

\begin{proof}
A brief proof is provided in the Appendix \ref{Appendix_A}.
\end{proof}
\begin{remark}
Under condition \eqref{condition}, problems $\mathcal{P}_1$ and $\mathcal{P}_2$ yield the same joint beamforming solution. Even when the condition is not satisfied, $\mathcal{P}_2$ can still be tackled using the same approach as $\mathcal{P}_1$ through the parameter transformation in \eqref{condition}. Note that this equivalence relies on the conservative approximation bound we adopt for the chance constraint \eqref{chance_C}, which renders $\mathcal{P}_2$ tractable---particularly allowing it to be solved within the same framework as $\mathcal{P}_1$. A tighter approximation bound could further improve the nonoutage probability \cite{Outage1}.
\end{remark}

To address $\mathcal{P}_{\rm new}$, we employ an alternating optimization (AO) scheme to efficiently obtain a stable solution. In particular, the global optimum (worst-case AR) is necessarily no smaller than the value achieved by this solution.
\subsection{Baseband Beamformer}
Given the PAs' positions $\mathbf{P}$, we omit the notation $(\mathbf{P})$ for brevity. Problem $\mathcal{P}_{\rm new}$ can be rewritten as:
\begin{subequations}\label{DL_SNR_Problem_w}
\begin{align}
\min_{{\mathbf{w}}}&~\delta\| {\mathbf{G}}\mathbf{w}\|_{\rm 2}-\lvert \bar{\mathbf{h}}^{\mathsf{H}}{\mathbf{G}}\mathbf{w}\rvert\\
{\rm{s.t.}}&~ \eqref{dl_c3},\nonumber
\end{align}
\end{subequations}
Despite its nonconvexity, the objective remains invariant under any phase rotation of $\mathbf{w}$.
Given this, without loss of optimality, the above problem can be further transformed into the following equivalent form:
\begin{subequations}\label{DL_SNR_Problem_w1}
\begin{align}
\min_{{\mathbf{w}}}&~\delta\| {\mathbf{G}}\mathbf{w}\|_{\rm 2}-\bar{\mathbf{h}}^{\mathsf{H}}{\mathbf{G}}\mathbf{w}\\
{\rm{s.t.}}&~ \Im \left\{ \bar{\mathbf{h}}^{\mathsf{H}}{\mathbf{G}}\mathbf{w} \right\} = 0,\label{dl_c4}\\
&~ \eqref{dl_c3}.\nonumber
\end{align}
\end{subequations}
Finally, by introducing an auxiliary variable $t$, we obtain the epigraph form:
\begingroup              
\setlength{\abovedisplayskip}{-1pt}
\setlength{\belowdisplayskip}{-1pt}
\begin{subequations}\label{DL_SNR_Problem_w2}
\begin{align}
\min_{{\mathbf{w}},t}&~t\\
{\rm{s.t.}}&~\delta\| {\mathbf{G}}\mathbf{w}\|_{\rm 2}-\bar{\mathbf{h}}^{\mathsf{H}}{\mathbf{G}}\mathbf{w} \leq t,\\
&~\eqref{dl_c3},\eqref{dl_c4},\nonumber
\end{align}
\end{subequations}
\endgroup
which is a standard SOCP problem and can be efficiently solved using CVX solvers.

In the lossless case ($\kappa=0$), problem \eqref{DL_SNR_Problem_w} can be simplified through straightforward derivation as follows:
\begin{subequations}\label{DL_SNR_Problem_w_noloss}
\begin{align}
\min_{{\mathbf{w}}}&~\delta\|\mathbf{w}\|_{\rm 2}-\lvert \bar{\mathbf{h}}^{\mathsf{H}}{\mathbf{G}}\mathbf{w}\rvert\\
{\rm{s.t.}}&~ \eqref{dl_c3},\nonumber
\end{align}
\end{subequations}
whose optimal solution of $\mathbf{w}$ can be readily obtained via MRT, i.e., $\mathbf{w}_{\rm MRT}=\frac{\mathbf{G}^{\mathsf{H}}\bar{\mathbf{h}}}{\|\mathbf{G}^{\mathsf{H}}\bar{\mathbf{h}}\|_{\rm2}}\sqrt{P_t}$.

\subsection{Pinching Beamformer}
Given $\mathbf{w}$, problem $\mathcal{P}_{\rm new}$ reduces to:
\begin{subequations}\label{DL_SNR_Problem_P}
\begin{align}
\max_{{\mathbf{P}}}&~\lvert \bar{\mathbf{h}}^{\mathsf{H}}({\mathbf{P}}){\mathbf{G}}({\mathbf{P}})\mathbf{w}\rvert - \delta\| {\mathbf{G}}({\mathbf{P}})\mathbf{w}\|_{\rm 2}\label{DL_SNR_Problem_P_obj}\\
{\rm{s.t.}}&~ \eqref{dl_c1},\eqref{dl_c2}.\nonumber
\end{align}
\end{subequations}
Since the objective function is multimodal, obtaining a globally optimal solution is challenging. A GS1D search method can be employed to sequentially update each PA's position and obtain a suboptimal solution. To reduce the computational overhead of 1D search as much as possible, the subproblem is reformulated as\footnote{To keep the formulation applicable to any channel estimation methods, we assume the estimated CSI is given and do not impose an explicit channel model (e.g., pure LoS, NLoS or mixed LoS/NLoS).}:
\begin{subequations}\label{DL_SNR_Problem_P1}
\begin{align}
\max_{p_{m,n}}&\!\left\lvert \bar{h}_{m,n}\eta_{m,n}^{\frac{1}{2}}{\rm{e}}^{-{\rm{j}}k_{\rm{g}}\left\lvert p_{m,n}-o_m\right\rvert} w_m\! + \!C_1\right\rvert\! - \!\delta\sqrt{\eta_{m,n}\lvert w_m \rvert^2 \!+\!C_2}\label{fpmn}\\
{\rm{s.t.}}& \eqref{dl_c1},\eqref{dl_c2},\nonumber
\end{align}
\end{subequations}
where $C_1 = \sum_{n' \neq n}\bar{h}_{m,n'}\eta_{m,n'}^{\frac{1}{2}}{\rm{e}}^{-{\rm{j}}k_{\rm{g}}\left\lvert p_{m,n'}-o_m\right\rvert} w_m + \sum_{m' \neq m}\bar{\mathbf{h}}_{\rm f}^{\mathsf{T}}({\mathbf{p}}_{m'}){\mathbf{g}}({\mathbf{p}}_{m'})w_{m'}$ and $C_2 = \lvert w_m \rvert^2\sum_{n'\neq n}\eta_{m,n'}+\sum_{m'\neq m}\|{\mathbf{g}}({\mathbf{p}}_{m'})w_{m'}\|_{\rm 2}^2$ denote constants, with $w_m$ being the $m$th entry of $\mathbf{w}$.
In the lossless case ($\kappa=0$), the second term in \eqref{DL_SNR_Problem_P} becomes constant w.r.t. $\mathbf{P}$ and can be dropped. Accordingly, the subproblem for updating $p_{m,n}$ simplifies to maximizing only the first term in \eqref{fpmn}, with $\eta_{m,n} = \frac{1}{N},~\forall m,n$.

The implementation details of 1D search are provided as follows.
In the continuous case, each waveguide is discretized into $N_s$ sampling points, leading to a sampling interval of $\delta_m = L_m/(N_s - 1)$. As $N_s \to \infty$, this scheme approximates a continuous activation. The set of candidate positions along waveguide $m$ is defined as
\begin{equation}\label{set_position}
\mathcal{X}_m \triangleq \left\{o_m + i\delta_m \;\middle|\; i \in \mathcal{I}_m \right\},\forall m\in[M],
\end{equation}
where $\mathcal{I}_m =\{ 0,1,\dots,N_s - 1\}$ denotes the all candidate indices. 
In the discrete case, the PAs' positions are constrained to a pre-configured set of $Q$ locations, corresponding to $\delta_m = L_m/(Q-1)$, and $\mathcal{X}_m$ follows the same structure as in \eqref{set_position} with appropriate $N_s=Q$.

For both cases, the position $p_{m,n}$ is updated by solving
\begingroup              
\setlength{\abovedisplayskip}{1pt}
\setlength{\belowdisplayskip}{1pt}
\begin{equation}\label{position}
p_{m,n}= \arg\max_{p_{m,n} \in \mathcal{X}_m/\mathcal{X}_m(\hat{\mathcal{I}}_m)} \eqref{fpmn},
\end{equation}
\endgroup
where $/$ denotes the set difference, and $\mathcal{X}_m(\hat{\mathcal{I}}_m)$ refers to the subset of $\mathcal{X}_m$ indexed by $i \in \hat{\mathcal{I}}_m$. The index set $\hat{\mathcal{I}}_m$ is defined as follows:
\begin{equation}
\hat{\mathcal{I}}_m \!=\!
\begin{cases}
  \displaystyle
  \mathcal{I}_m \!\cap\!\bigcup_{n'=1}^{n-1} 
  \left\{ i \!\;\middle|\;\!
  i \!\in \!\left\{ i_{m,n'}^{\rm floor}, 
  \dots, 
  i_{m,n'}^{\rm ceil} \right\} 
  \right\},\! & n \neq 1 \\
  \varnothing,\! & n=1,
\end{cases}
\end{equation}
where $i_{m,n'}^{\rm floor}=\left\lfloor \frac{p_{m,n'}-o_m-\Delta_{\rm min}}{\delta_m} \right\rfloor$, $i_{m,n'}^{\rm ceil}=\left\lceil \frac{p_{m,n'}-o_m+\Delta_{\rm min}}{\delta_m} \right\rceil$, ensuring that the spacing between any two PAs on the same waveguide satisfies constraint \eqref{dl_c1}. By sequentially updating all $\{p_{m,n}, \forall m,n\}$ by using \eqref{position}, a locally optimal solution to \eqref{DL_SNR_Problem_P1} can be obtained. 
Since the objective value of $\mathcal{P}_{\rm new}$ is non-decreasing in the alternating updates of $\mathbf{w}$ and $\mathbf{P}$, convergence to a stationary point is guaranteed. 
In the lossy setting, the per-iteration complexity comprises: (i) updating $\mathbf{w}$ via \eqref{DL_SNR_Problem_w2}, which involves $M{+}1$ variables and two SOC constraints of dimensions $MN$ and $M$ and, by \cite{ben2001lectures}, requires $\mathcal{O}(M^{3}N^{2})$ operations; and (ii) updating $\mathbf{P}$, which entails a 1D search of $\mathcal{O}(N_sMN)$ plus computing the constants $C_1$ and $C_2$ with $\mathcal{O}(M^{2}N)$. Therefore, the overall per-iteration complexity is
$\mathcal{O}\!\left(M^{3}N^{2} + N_sMN + M^{2}N\right)$.
In the lossless setting, the $\mathbf{w}$-update reduces to a closed-form MRT beamformer with negligible cost, so the per-iteration complexity is dominated by the $\mathbf{P}$-update and becomes $\mathcal{O}\!\left(N_sMN + M^{2}N\right)$.


\section{Numerical Results}
This section presents simulation results to validate the effectiveness of the proposed algorithm. Unless stated otherwise, parameters are configured as follows. Each of the $M = 4$ waveguides has a length $L_m = D_x = 50~\mathrm{m}$ and is deployed at a height $a = 5~\mathrm{m}$, with uniform spacing $d = \frac{D_y}{M - 1}$ along the $y$-axis, where $D_y = 6~\mathrm{m}$. Each waveguide simultaneously activates $N = 4$ PAs.
The system operates at $f = 28~\mathrm{GHz}$ with transmit power $P_t = 0~\mathrm{dBm}$, noise power $\sigma^2 = -90~\mathrm{dBm}$, and minimum PA spacing $\Delta_{\rm min} = \lambda/2$. The waveguide wavelength is $\lambda_g = \frac{\lambda}{n_{\rm eff}}$, where $n_{\rm eff} = 1.4$, and the attenuation factor is set to $\kappa = 0.08~\mathrm{dB/m}$ \cite{ding2024flexible}.
For continuous activation, we use $N_s = 10^5$ samples; for discrete activation, $Q = 100$ pre-defined locations. Initial PA positions are randomly generated within the feasible region, and the algorithm's performance is insensitive to this initialization.
The normalized channel error bound is defined as $\bar{\delta}=\frac{\delta}{\|\bar{\mathbf{h}}(\mathbf{P})\|_{\rm 2}}$ and set to $\bar{\delta} = 0.3$\footnote{In simulations, $\delta$ is determined based on the estimated channel at the initial PA positions. Alternatively, both systems may use a common $\delta$ derived from the fixed-antenna system.}.
The fixed-antenna system adopts a hybrid transceiver architecture with $M$ RF chains, each connected to $N$ antenna elements via phase shifters. Analog and digital beamformers are designed according to \cite{Zhang2024hybrid}.
All results are averaged over $100$ independent random realizations.

\begin{figure}[!t]
 \centering
\setlength{\abovecaptionskip}{0pt}
\includegraphics[height=0.2\textwidth]{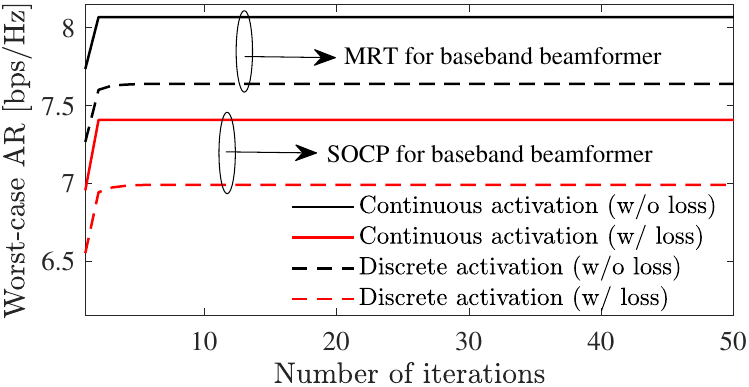}
\caption{The convergence behavior of the proposed algorithms.}
\label{Convergence}
\vspace{-5pt}
\end{figure}
\begin{figure}[!t]
  \centering
  \setlength{\abovecaptionskip}{0pt}
  \subfigure[Continuous activation.]{
    \includegraphics[height=0.18\textwidth]{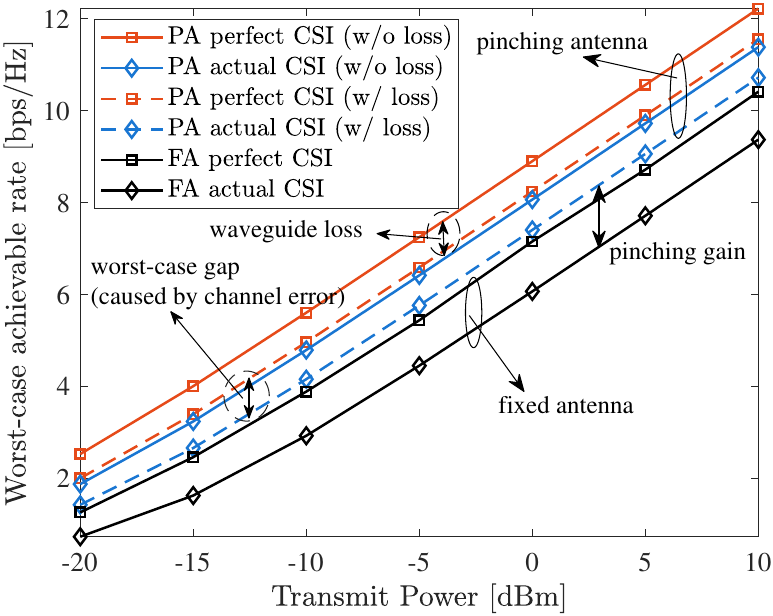}
    \label{Continuous}
  }
  \hspace{-0.1cm}
  \subfigure[Discrete activation.]{
    \includegraphics[height=0.18\textwidth]{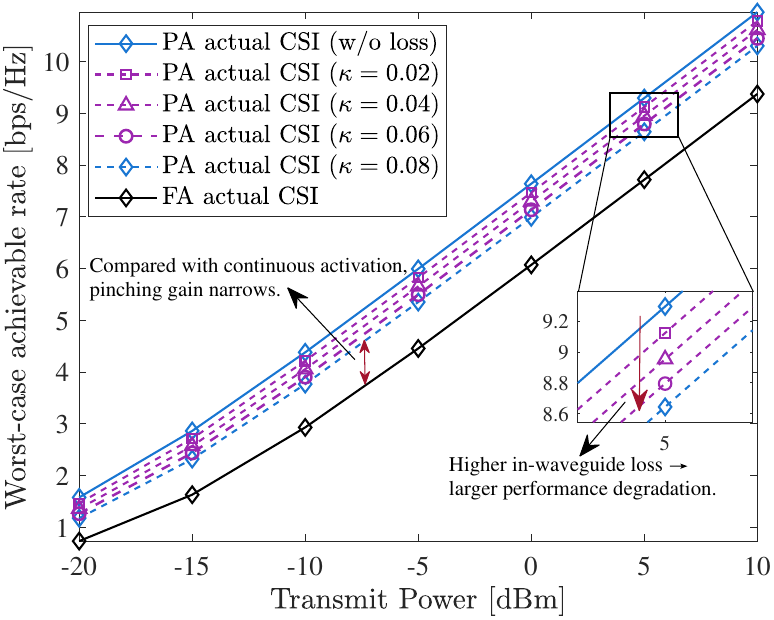}
    \label{Discrete}
  }
  \caption{Worst-case AR versus different transmit power.}
  \label{fig:DL_power}
  \vspace{-5pt}
\end{figure}
\begin{figure}[!t]
  \centering
  \setlength{\abovecaptionskip}{0pt}
  \subfigure[]{
    \includegraphics[height=0.18\textwidth]{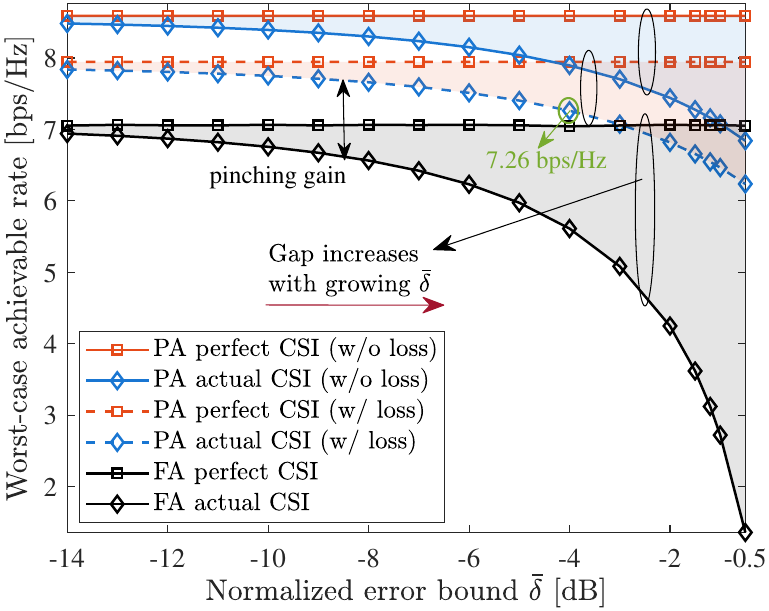}
    \label{error_bounds}
  }
  \hspace{-0.1cm}
  \subfigure[]{
    \includegraphics[height=0.18\textwidth]{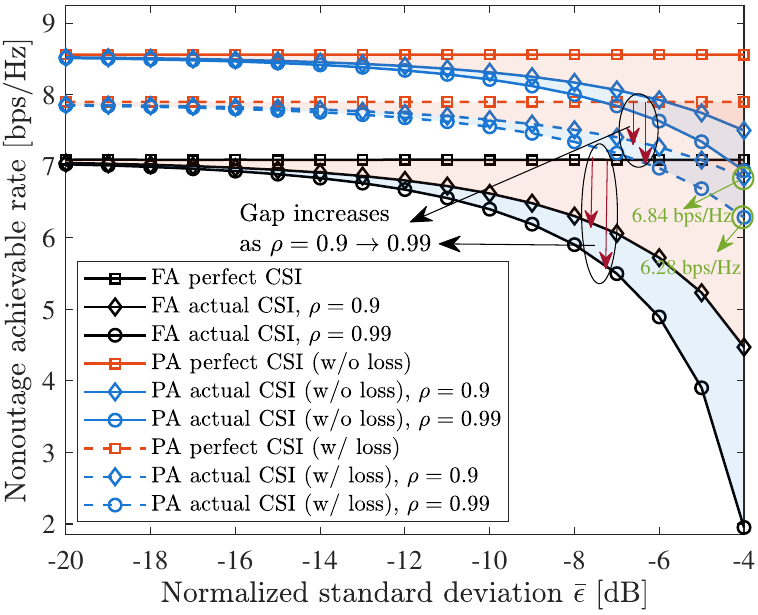}
    \label{variance}
  }
  \caption{Worst-case AR versus different normalized error bounds $\bar{\delta}$ and standard deviations $\bar{\epsilon}$.}
  \label{fig:error}
  \vspace{-5pt}
\end{figure}

Fig.~\ref{Convergence} illustrates the convergence behavior of the proposed algorithm. It converges within five iterations under both lossy and lossless conditions, validating its effectiveness. 
The discrete activation exhibits a relatively slower convergence rate, which is attributed to the lower precision in tuning the signal phase and amplitude.

Fig. \ref{Continuous} and Fig. \ref{Discrete} show the worst-case ARs versus transmit power for continuous and discrete PASS, respectively. As the transmit power increases, all schemes see improved performance. While waveguide loss degrades PASS performance, it still clearly outperforms the fixed-antenna system. For instance, in Fig. \ref{Continuous}, the worst-case AR of PASS surpasses the AR of the fixed-antenna system even in its perfect-CSI case, highlighting PASS's superiority in enhancing signal strength. In contrast, Fig. \ref{Discrete} shows reduced gains with discrete activation due to coarse PA position tuning, which limits phase and amplitude precision. Moreover, It also illustrates that the worst-case AR decreases as the in-waveguide loss increases, due to the attenuation of the received signal power.



Fig. \ref{error_bounds} shows how the worst-case AR varies with the normalized channel error bound $\bar{\delta}$. As expected, the gap between perfect and worst-case CSI increases with growing $\bar{\delta}$ for all systems. Within the considered range of $\bar{\delta}$, PASS consistently achieves a higher worst-case AR than the fixed-antenna system under perfect CSI, demonstrating its effectiveness in practical deployments. Moreover, compared to the fixed-antenna system, the worst-case AR of PASS degrades more slowly. This is because, under a given $\delta$, PASS can adjust PAs' positions to be as close to the user as possible, thereby improving the channel gain. This confirms the superior robustness of PASS in mitigating the impact of channel estimation errors.

Fig. \ref{variance} illustrates the variation of the nonoutage AR under the probabilistic error model with respect to the normalized standard deviation $\bar{\epsilon}$. Similarly, $\bar{\epsilon}$ is defined with $\bar{\epsilon}=\frac{\epsilon}{\|\bar{\mathbf{h}}(\mathbf{P})\|_{\rm 2}}$. It can be observed that Fig. \ref{variance} exhibits a similar decreasing trend as Fig. \ref{error_bounds}. 
However, when $\bar{\epsilon}$ and $\bar{\delta}$ take the same value (e.g. $\bar{\epsilon}=\bar{\delta}=-4$ dB), the nonoutage AR decreases more significantly than the worst-case AR in Fig.~\ref{error_bounds}. This is consistent with \textbf{Lemma~\ref{lemma1}}, which establishes the scaling relationship between $\delta$ and $\epsilon$ via a factor $\sqrt{-\ln(1-\rho)}$. In particular, $\epsilon$ is effectively amplified when the nonoutage probability satisfies $\rho > 1 - \tfrac{1}{e} \approx 0.37$.
In addition, as $\rho$ increases, the nonoutage AR decreases, which is expected, since reliability and efficiency inherently trade off in communication systems.

\section{Conclusion}
This paper proposed a robust beamforming algorithm for PASS under imperfect channel conditions. It was shown that the optimization problems corresponding to both the norm-bounded and probabilistic error models could be equivalently transformed, allowing them to be addressed within a unified AO framework. This equivalence also provided an intuitive understanding of the relationships among the norm bound $\delta$, the standard deviation $\epsilon$, and the nonoutage probability $\rho$. For example, increasing $\delta$ could be interpreted as improving the system's nonoutage probability for a given $\epsilon$. Simulation results validated the effectiveness of the proposed algorithm and further demonstrated that, compared to fixed-antenna systems, PASS enhanced channel gain through pinching beamforming while exhibiting greater robustness against increasing channel uncertainties. Robust multiuser beamforming problems, such as weighted sum-rate maximization and max-min fairness, are considerably more complex, and a thorough treatment is left for future work.

\appendices
\section{Proof of \textbf{Lemma~\ref{lemma1}}}\label{Appendix_A}
Based on the fact that the channel error is not sufficiently large, it can be ensured that $\lvert \bar{\mathbf{h}}^{\mathsf{H}}({\mathbf{P}}){\mathbf{G}}({\mathbf{P}})\mathbf{w}\rvert > \delta\| {\mathbf{G}}({\mathbf{P}})\mathbf{w}\|_{\rm 2}$ holds. We thus begin with $\lvert{\mathbf{h}}^{\mathsf{H}}({\mathbf{P}}){\mathbf{G}}({\mathbf{P}})\mathbf{w}\rvert \geq \lvert \bar{\mathbf{h}}^{\mathsf{H}}({\mathbf{P}}){\mathbf{G}}({\mathbf{P}})\mathbf{w}\rvert - \lvert \mathbf{e}^{\mathsf{H}}({\mathbf{P}}){\mathbf{G}}({\mathbf{P}})\mathbf{w}\rvert$, from which it can be further deduced that 
\begingroup              
\setlength{\abovedisplayskip}{2pt}
\setlength{\belowdisplayskip}{2pt}
\begin{align}
&\text{Pr}\left\{\lvert{\mathbf{h}}^{\mathsf{H}}({\mathbf{P}}){\mathbf{G}}({\mathbf{P}})\mathbf{w}\rvert \geq \Gamma\right\} \\ \nonumber
&\geq \text{Pr}\left\{\lvert \bar{\mathbf{h}}^{\mathsf{H}}({\mathbf{P}}){\mathbf{G}}({\mathbf{P}})\mathbf{w}\rvert - \lvert \mathbf{e}^{\mathsf{H}}({\mathbf{P}}){\mathbf{G}}({\mathbf{P}})\mathbf{w}\rvert \geq \Gamma\right\}.
\end{align}
\endgroup
Accordingly, the original problem $\mathcal{P}_2$ can be rewritten in the following form:
\begin{subequations}\label{DL_SNR_Problem_chance2}
\begin{align}
\max_{{\mathbf{w}},{\mathbf{P}},\Gamma}&~ \Gamma\\
{\rm s.t.}~ & \text{Pr}\left\{\lvert \mathbf{e}^{\mathsf{H}}({\mathbf{P}}){\mathbf{G}}({\mathbf{P}})\mathbf{w}\rvert \leq \lvert \bar{\mathbf{h}}^{\mathsf{H}}({\mathbf{P}}){\mathbf{G}}({\mathbf{P}})\mathbf{w}\rvert \!-\! \Gamma\right\} \!\geq \!\rho\label{nonoutage}\\ 
& \eqref{dl_c3},\eqref{dl_c1},\eqref{dl_c2}.\nonumber
\end{align}
\end{subequations}
To solve this problem, a direct approach is to seek a closed-form expression for the nonoutage probability in constraint \eqref{nonoutage}. Given $\mathbf{e}(\mathbf{P}) \sim \mathcal{CN}(\mathbf{0}, \epsilon^2 \mathbf{I})$, it follows that 
$\mathbf{w}^{\mathsf{H}}{\mathbf{G}}^{\mathsf{H}}({\mathbf{P}})\mathbf{e}(\mathbf{P}) \sim \mathcal{CN}\left(0, \epsilon^2 \|{\mathbf{G}}({\mathbf{P}})\mathbf{w}\|_{\rm 2}^2\right)$. According to the definition of the Rayleigh distribution, the magnitude $|\mathbf{w}^{\mathsf{H}}{\mathbf{G}}^{\mathsf{H}}({\mathbf{P}})\mathbf{e}(\mathbf{P})|$ is Rayleigh distributed, whose cumulative distribution function is given by $F(x) = 1 - \text{exp}\left\{ -\frac{x^2}{\epsilon^2 \|{\mathbf{G}}({\mathbf{P}})\mathbf{w}\|_{\rm 2}^2}\right\}, x>0$. Therefore, we have
\begin{align}
&\text{Pr}\left\{\lvert \mathbf{e}^{\mathsf{H}}({\mathbf{P}}){\mathbf{G}}({\mathbf{P}})\mathbf{w}\rvert \leq \lvert \bar{\mathbf{h}}^{\mathsf{H}}({\mathbf{P}}){\mathbf{G}}({\mathbf{P}})\mathbf{w}\rvert - \Gamma\right\} \\ \nonumber
&= 1- \text{exp}\left\{ -\frac{(\lvert \bar{\mathbf{h}}^{\mathsf{H}}({\mathbf{P}}){\mathbf{G}}({\mathbf{P}})\mathbf{w}\rvert - \Gamma)^2}{\epsilon^2 \|{\mathbf{G}}({\mathbf{P}})\mathbf{w}\|_{\rm 2}^2}\right\}.
\end{align}
Substituting this into \eqref{nonoutage} yields
\begingroup              
\setlength{\abovedisplayskip}{2pt}
\setlength{\belowdisplayskip}{2pt}
\begin{align}
\|{\mathbf{G}}({\mathbf{P}})\mathbf{w}\|_{\rm 2} \leq \frac{\lvert \bar{\mathbf{h}}^{\mathsf{H}}({\mathbf{P}}){\mathbf{G}}({\mathbf{P}})\mathbf{w}\rvert - \Gamma}{\epsilon\sqrt{-\ln(1-\rho)}},
\end{align}
\endgroup
and problem \eqref{DL_SNR_Problem_chance2} can be further reformulated as follows:
\begingroup              
\setlength{\abovedisplayskip}{2pt}
\setlength{\belowdisplayskip}{2pt}
\begin{subequations}\label{DL_SNR_Problem_chance3}
\begin{align}
\max_{{\mathbf{w}},{\mathbf{P}},\Gamma}&~ \Gamma\\
{\rm s.t.}~ & \lvert \bar{\mathbf{h}}^{\mathsf{H}}({\mathbf{P}}){\mathbf{G}}({\mathbf{P}})\mathbf{w}\rvert\!-\!\epsilon\sqrt{\!-\!\ln(1\!-\!\rho)}\|{\mathbf{G}}({\mathbf{P}})\mathbf{w}\|_{\rm 2} \!\geq \!\Gamma\label{nonoutage1}\\ 
& \eqref{dl_c3},\eqref{dl_c1},\eqref{dl_c2},\nonumber
\end{align}
\end{subequations}
\endgroup
which is equivalent to problem $\mathcal{P}_{\rm new}$ under the condition $\delta = \epsilon\sqrt{-\ln(1-\rho)}$, and the proof is thus complete.

\bibliographystyle{IEEEtran}
\bibliography{mybib}
\end{document}